\newtheorem{theorem}{Theorem}
\theoremstyle{remark}
\begin{document}

\title{The More the Merrier: Enhancing Reliability of 5G Communication Services with Guaranteed Delay}

\author{Prabhu Kaliyammal Thiruvasagam, Vijeth J Kotagi, and C Siva Ram Murthy \\ 
	Indian Institute of Technology Madras, Chennai 600036, India \\
	prabhut@cse.iitm.ac.in, vijethjk@gmail.com, murthy@iitm.ac.in}

\maketitle

\begin{abstract}
Although network functions virtualization and software-defined networking offer many dynamic features such as flexibility, scalability, and programmability for easy provisioning of services at a lesser cost and time through service function chaining, it introduces new challenges in terms of reliability, availability, and latency of services. Particularly, softwarization of network and service functions (e.g., virtualization, anything as a service, dynamic virtual chaining, and routing) impose high possibility of network failures due to software issues than hardware. In this letter, we propose a novel solution called eRESERV\let\thefootnote\relax\footnote{This work was supported by the Department of Science and Technology, New Delhi, India.} to enhance the reliability of service chains in 5G while meeting the service level agreements.
\end{abstract}

\begin{IEEEkeywords}
5G network,  Communication service, Virtual network function, Service function chaining, Reliability, Resource management, Service level agreement.
\end{IEEEkeywords}

\IEEEpeerreviewmaketitle

\section{Introduction}

Softwarization in 5G networks to support services such as enhanced mobile broadband and ultra-reliable low-latency communications has revolutionized the networking industry. It is expected that 5G networks will meet the stringent requirements of communication services and business models of 2020 and beyond \cite{5g_req}. Network Functions Virtualization (NFV) and Software-Defined Networking (SDN) are considered as the key technology enablers for softwarization in 5G networks \cite{zarrar}.

NFV allows network functions (or middleboxes) to run as software modules on commercial-off-the-shelf servers rather than running on specialized hardware appliances. These virtualized software modules are called as Virtual Network Functions (VNFs). NFV leverages virtualization, cloud computing, and SDN technologies to provide anything as a service (e.g., core network as a service, security as a service, etc.) dynamically over the network, and reduces capital expenditures and operational expenditures.

Traditionally, network/communication services are provided through one or more network functions to deliver an end-to-end (e2e) service. Service Function Chaining (SFC) or simply service chain involves instantiation of an ordered list of network/service functions (e.g., firewall, IDS, and proxy), and connecting them together as a chain of network functions to provide the e2e services \cite{Haeffner-SFC}. NFV facilitates easy provisioning of services by dynamically placing VNFs in the virtual environment and chaining them together as an SFC. As a 5G network aims at providing services to diverse industry verticals, tens to hundreds of VNFs are placed on a set of servers and chained to create multiple SFCs. 

Although NFV and SDN provide many benefits in terms of cost reduction and flexible management of resources to 
dynamically provide diverse services, they create avenues for reliability, availability, and latency related issues as they are prone to software failures. Particularly, softwarization of network and service functions impose high possibility of network failures due to software issues than hardware. For instance, failure of a VNF or a virtual link in an SFC will bring down the entire chain and disrupt the service which may result in customer dissatisfaction and revenue loss. Failures may happen both at the substrate network and virtual network, but the frequency of failures at virtual network is higher than substrate networks \cite{Benz}. Generally, failures in virtual network may happen due to software bugs, API failures, incorrect specifications, network design flaws, improper testings, and network operator errors.

Another important aspect of NFV is meeting Service Level Agreements (SLAs) in terms of delay and resources of all services. A common approach to achieve higher reliability and meeting delay constraints is by placing redundant network elements (also called as backup) \cite{Rahman}\cite{Qu-1}\cite{Sun}. However, placing redundant network elements are expensive and ineffective in terms of effective utilization of available resources. 

Therefore, in this letter, we propose a novel solution dubbed \textit{e}nhancing \textit{RE}liability of \textit{SERV}ice chain (eRESERV), which enhances the reliability of an SFC while meeting the delay constraints of the SFC. The proposed solution also minimizes the resource requirement of an SFC to achieve higher reliability without compromising the delay constraints. By extensive simulations we show the effectiveness of the proposed solution in terms of reliability, expected response time, and resource requirement when compared to traditional backup settings. We further analyze our solution using queuing theory by modeling an SFC as M/M/1 and M/M/m tandem network of queues. 

\section{Network Model}
We represent the physical/substrate network as a graph $G_p=(\mathcal{N},\mathcal{L})$, where $\mathcal{N}$ represents the set of physical nodes and $\mathcal{L}$ represents the set of physical links. Physical resources are virtualized to create virtual networks and controlled with the help of hypervisor (or virtual machine manager) and SDN controller. SFCs are created at the virtual network in the cloud data center to provide service for various network service requests. We represent the virtual network as a graph $G_v = (\mathcal{V}, \mathcal{E})$, where $\mathcal{V}$ represents a set of VNFs (e.g., load balancer, firewall, intrusion detection system, proxy, mobility management entity, serving/packet gateway, home subscriber server, etc.) and $\mathcal{E}$ represents a set of virtual links in the system.
VNFs are hosted on the physical servers and virtual links are created to interconnect VNFs and carry virtual network traffic over the physical links. The physical and virtual network resources together form NFV infrastructure (NFVI), which is managed and controlled by Virtual Infrastructure Manager (VIM) along with SDN controller. 

We assume that service providers offer a finite set of services using SFCs. Let the set of all SFCs provided by a service provider be denoted by $\mathcal{S}$. Each SFC $s \in \mathcal{S}$ provides a particular service and is represented as an acyclic directed graph $G_s = (\mathcal{V}_s, \mathcal{E}_s)$, where $\mathcal{V}_s$ and $\mathcal{E}_s$ represent the set of VNFs in sequential order and the set of links that interconnect these VNFs, respectively. For example, consider a web service request $s$, where the set of VNFs $\mathcal{V}_s$ required to cater the service $s$ in an order are firewall, intrusion detection system, and proxy. As each SFC provides a particular service, we use terms SFC and service interchangeably.

In this letter, we consider that any service request $s \in \mathcal{S}$ has a latency requirement denoted by $\Psi_s$, and is considered to have an arrival rate $\lambda_s$ which follows Poisson distribution. Each VNF $v \in \mathcal{V}$ is considered to have a processing rate of $\mu_v$ and follows an exponential distribution with corresponding response (both waiting and processing) time $\psi_v$. 

Now our aim is to design an SFC to provide service for a service request such that the SFC offers high reliability, meets the delay constraint $\Psi_s$, and request is satisfied with the minimal resources. In this letter, we consider number of virtual cores (directly relates to $\mu_v$), assigned to all VNFs in $\mathcal{V}_s$ to process the traffic of requested service, as the resources.

\section{Enhancing Reliability of an SFC}
\label{sec:enhanceRel}
Consider an SFC $s$ as shown in Fig. \ref{Fig:sc}, which requires four VNFs, i.e., $|\mathcal{V}_s| = 4$ and has arrival rate $\lambda_s$. Each VNF $v$ is reliable with probability $p_v$. If an SFC $s$ provides a service for a service request which has delay constraint of $\Psi_s$, then $\sum\limits_{v \in \mathcal{V}_s} \psi_v \le \Psi_s$. The resource requirement of $s$ is 
$\gamma_s = \sum\limits_{v \in \mathcal{V}_s} \gamma_v$. If each VNF $v \in \mathcal{V}_s$ is reliable with a probability $p_v$, then the reliability $r_s$
can be calulated as,
\begin{align}
r_s &= \prod_{v \in \mathcal{V}_s} p_v \label{eq:relSFC}
\end{align}

Now consider a backup setting where each VNF is provided with a dedicated $b$ number of backups. Fig. \ref{Fig:scb}
 shows one example where each VNF is provided with one backup. Now, reliability $p_v(b)$ of each VNF $v \in \mathcal{V}_s$ can be calculated as,
\begin{equation}
p_v(b) = 1 - (1-p_v)^b \times (1-p_v) = 1 - (1-p_v)^{b+1}
\label{eq:relOfVNFWithBackup}
\end{equation}
Now, by substituting above equation in Equation \ref{eq:relSFC}, the new reliability of the SFC is,

\begin{equation}
r_s(b) = \prod_{v \in \mathcal{V}_s} p_v(b) = \prod_{v \in \mathcal{V}_s} (1 - (1-p_v)^{b+1})
\label{equ:scb_rel}
\end{equation}
Clearly, $r_s(b) > r_s~\forall b \ge 1$. However, the number of resources now required for the SFC $s$ will be 
\begin{equation}
\gamma_s = (b+1) \times \sum\limits_{v \in \mathcal{V}_s} \gamma_v
\label{equ:scb_rc}
\end{equation}

\begin{figure}[t]
	\centering
	\includegraphics [width=\columnwidth, height=1.3cm]{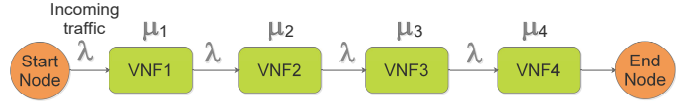}
	\caption{An ordered sequence of VNFs in an SFC.} 				
	\label{Fig:sc}
\end{figure}
\begin{figure}[t]
	\centering
	\includegraphics [width=\columnwidth, height=2cm]{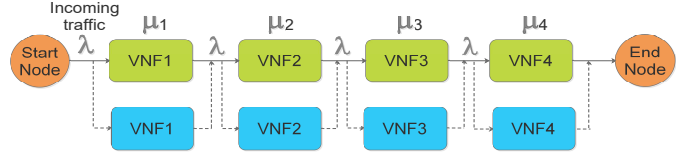}
	\caption{An SFC with single dedicated backup.} 				
	\label{Fig:scb}
\end{figure}

Although assigning redundant backup resources increases the reliability of service chains, this approach is inefficient with respect to efficient utilization of resources. The redundant backup resources are idle until a failure happens in the primary nodes or links. Also, since failure may happen randomly at any point of time, assigned redundant backup resources cannot be used for any other purposes. In this letter, we propose a novel solution called eRESERV which enhances the reliability of service chains
with less additional resources rather than assigning redundant
backup resources.

Here, instead of providing additional backups to VNFs in an SFC, in this letter, we propose to divide an SFC into multiple subchains of SFC with lower capacity VNFs to increase the reliability. 

\begin{theorem}
Dividing an SFC into subchains of SFC with VNFs of lesser capacity will increase the reliability of the system.
\label{thm:subChain}
\end{theorem}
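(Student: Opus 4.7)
The plan is to formalize the intuition that replacing a single SFC by several parallel subchains causes the end-to-end service to fail only when all of the subchains fail simultaneously, which is strictly rarer than the single chain failing. Equations \ref{eq:relSFC}--\ref{equ:scb_rc} will be the main algebraic tools, and the argument will mirror the form of Equation \ref{eq:relOfVNFWithBackup} but applied at the chain level rather than the VNF level.

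First, I would fix the decomposition: suppose the original chain $s$ is replaced by $k \ge 2$ parallel subchains $s^{(1)}, \dots, s^{(k)}$, each traversing the same ordered VNF sequence $\mathcal{V}_s$ but composed of lower-capacity VNF instances that share the offered load $\lambda_s$. Since the reliability $p_v$ is a per-instance attribute (independent of the assigned $\mu_v$), each subchain is itself an SFC whose reliability, by Equation \ref{eq:relSFC}, equals $r_s = \prod_{v \in \mathcal{V}_s} p_v$.

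Next, I would model the aggregate availability. Requests may be routed to any subchain that is fully up, so the system is unavailable only when every subchain is down. Treating failures of the $k$ physically distinct subchains as independent, the probability that all subchains are simultaneously down is $(1-r_s)^k$, so the system reliability becomes $R_s(k) = 1 - (1-r_s)^k$. The desired inequality $R_s(k) > r_s$ is then equivalent to $(1-r_s)^{k-1} < 1$, which holds for every $k \ge 2$ whenever $r_s \in (0,1)$; a routine induction additionally shows $R_s(k)$ is strictly increasing in $k$ and tends to $1$.

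The main obstacle is not the algebra but the modeling justification. Specifically, I would need to argue clearly that (i) per-instance reliability $p_v$ is preserved (or improved) when an instance is downscaled in capacity, so that each subchain inherits reliability $r_s$; and (ii) a single surviving subchain, sized appropriately relative to the split arrival rate, can still honour the SLA delay $\Psi_s$, so that ``at least one subchain up'' is the correct availability criterion rather than a stricter ``all-up'' one. Once these assumptions are made explicit, the reliability gain is an immediate consequence of parallel redundancy and the theorem follows.
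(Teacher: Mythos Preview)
Your proposal is correct and follows essentially the same route as the paper: set up $k$ (the paper uses $l$) parallel subchains each with reliability $\prod_{v}p_v$, invoke the ``at least one up'' criterion to obtain $R_s(k)=1-(1-r_s)^k$, and conclude strict increase in $k$. The only cosmetic difference is that the paper establishes monotonicity by differentiating $r_s$ with respect to $l$, whereas you use the elementary inequality $(1-r_s)^{k-1}<1$ and induction; your version is arguably cleaner since $k\in\mathbb{N}$.
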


\begin{proof}
Consider an SFC $s$ with arrival rate $\lambda_s$ is divided into $l$ subchains with each VNF $v \in \mathcal{V}_s$ having processing rate of $\frac{\mu_v}{l}$ and each subchain having an arrival rate of $\frac{\lambda_s}{l}$ as shown in Fig. \ref{Fig:ssc1}. However, as the reduced capacity/processing rate VNFs are performing the same software functionality as that of original VNF, the reliability of each VNF is still $p_v$. Let each subchain of $s$ be represented by $\{\bar{s}_1, \bar{s}_2, \ldots, \bar{s}_l\}$. Now, the reliability $r_{\bar{s}_i}$ of each subchain $\bar{s}_i$ can be calculated as,
\begin{equation}
r_{\bar{s}_i} = \prod\limits_{v \in \mathcal{V}_s} p_v \quad \forall i \in [1,l]
\end{equation}
However, for the system to be reliable, at least one of the subchains must be active. Therefore, reliability of the system $r_s$ can now be calculated as,
\begin{equation}
r_s= 1 - \prod\limits_{i=1}^{l}(1- r_{\bar{s}_i}) = 1 - \prod\limits_{i=1}^{l} (1 -  \prod\limits_{v \in \mathcal{V}_s} p_v) = 1 - (1 -  \prod\limits_{v \in \mathcal{V}_s} p_v)^l
\end{equation}
Differentiating the above equation with respect to $l$, we get
\begin{equation}
\frac{\text{d} r_s}{\text{d} l} = - (1- \prod\limits_{v \in \mathcal{V}_s} p_v)^l \times  \log (1 - \prod\limits_{v \in \mathcal{V}_s} p_v) > 0
\end{equation}
As $r_s$ is an increasing function with respect to $l$ and there do not exist any extreme points for all $l \in \mathbb{N}$, we can say that $r_s$ increases with increase in $l$. 
\end{proof}

\begin{figure}[t]
	\centering
	\includegraphics [width=\columnwidth, height=4.75cm]{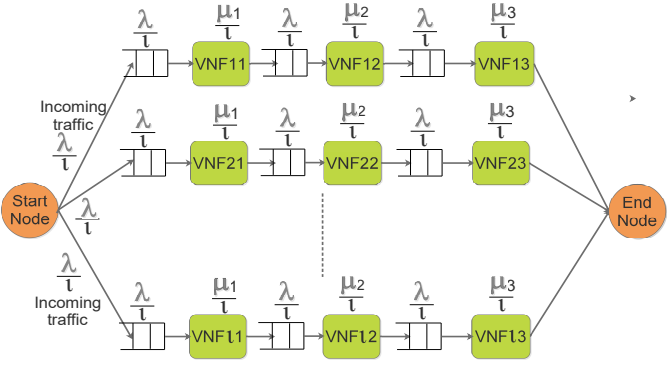}
	\caption{Subchaining as M/M/1 tandem network of queues.} 				
	\label{Fig:ssc1}
\end{figure}

\textit{Discussion: }
\begin{enumerate}
	\item 
	Although reliability of an SFC increases with increase in $l$, the reliability of this system is lesser than the backup setting. 
	\item
	However, the number of resources now required for an SFC will not increase, i.e., $\gamma_s = \sum\limits_{i=1}^{l}\sum\limits_{v \in \mathcal{V}_{\bar{s}_i}} \gamma_v = \sum\limits_{v \in \mathcal{V}_s} \gamma_v $ 
\end{enumerate}

\subsection{Analysing Delay}
From Theorem \ref{thm:subChain}, it is clear that, if an SFC is divided into multiple subchains, the reliability of the system increases. However, the new system with subchains must also meet the latency requirement $\Psi_s$ of the SFC $s$. To understand the effect of subchaining of an SFC on latency, we model subchaining of the SFC as a tandem of M/M/1 queuing network. By Burke's theorem \cite{Burke}, the arrival rates for all M/M/1 queues in the tandem of M/M/1 queuing network are same (refer Fig. \ref{Fig:ssc1}). 

Here, we model every VNF in a chain/subchain as an M/M/1 queue. The response time $\mathbb{E}[R]_v$ of the VNF $v$ can be calculated as,
\begin{equation}
\mathbb{E}_v[R] = \frac{1}{\mu_v - \lambda_v}
\end{equation}

\noindent where, $\lambda_v$ is the arrival rate to the VNF $v$. 

Consider an SFC $s$ with $|\mathcal{V}_s|$ number of VNFs. By Burke's theorem, $\lambda_v = \lambda_s~ \forall v \in \mathcal{V}_s$. The expected response time $\mathbb{E}_s[R]$ of the SFC can be calculated as,
\begin{equation}
\mathbb{E}_s[R] = \sum\limits_{v \in \mathcal{V}_s} \mathbb{E}_v[R] = \sum\limits_{v \in \mathcal{V}_s} \frac{1}{\mu_v - \lambda_s} \label{eq:delaySFC}
\end{equation}

Now, consider an SFC which is divided into $l$ subchains. As every packet traverses one of the subchains, say $\bar{s}$, the expected response time of the the SFC can be calculated as,
\begin{equation}
\mathbb{E}^{M/M/1}_s[R] = \sum\limits_{v \in \mathcal{V}_{\bar{s}}} \frac{1}{\frac{\mu_v}{l} - \frac{\lambda_s}{l}} = \sum\limits_{v \in \mathcal{V}_{\bar{s}}} \frac{l}{\mu_v - \lambda_s} \label{eq:delayMM1}
\end{equation}
Clearly, Equation $\ref{eq:delayMM1}$ is $l$ times of Equation $\ref{eq:delaySFC}$, showing that the response time of an SFC with subchains increases linearly when compared to original SFC without subchains. 

\subsection{Identifying Number of Subchains}
As every SFC $s$ has the delay constraint of $\Psi_s$, the delay incurred in an SFC with subchains should not exceed $\Psi_s$. Therefore,
\begin{align}
\sum\limits_{v \in \mathcal{V}_{\bar{s}}} \frac{l}{\mu_v - \lambda_s} \le \Psi_s 
\implies l \le \frac{\Psi_s }{\sum\limits_{v \in \mathcal{V}_{\bar{s}}} \frac{1}{\mu_v - \lambda_s}}
\label{equ:l_mm1}
\end{align}

\subsection{Decreasing Response Time}
In the previous subsection, we saw that the expected response time is linearly increasing with the number of subchains $l$. In this subsection we propose an alternate way to improve the response time. Here, we propose to have a common scheduler for every VNF as shown in Fig. \ref{Fig:sc-M/M/m} instead of having an individual scheduler for each VNF as in Fig. \ref{Fig:ssc1}. Now, the new system can be modeled as an M/M/m queuing system. Now, if a VNF is divided into $l$ smaller VNFs, then the expected response time of a VNF $v$ can be calculated as,
\begin{align}
\mathbb{E}_v^{M/M/l}[R] = \frac{l}{\mu_v} & \times \bigg(1+\frac{\varrho}{l(1-\frac{\lambda_v}{\mu_v})}\bigg) \\
%\end{align}    
%where,
%\begin{align*}
\text{where,}~~\varrho = \frac{(\frac{l\lambda_v}{\mu_v})^l}{l!(1-\frac{\lambda_v}{\mu_v})} & \times \bigg( \frac{1}{1+\frac{(\frac{l\lambda_v}{\mu_v})^l}{l!(1-\frac{\lambda_v}{\mu_v})}+\sum \limits _ {i=1}^{l-1} \frac{(\frac{l\lambda_v}{\mu_v})^i}{i!}} \bigg) \nonumber
\end{align}
Now, the expected response time of an SFC $s$ modeled as a tandem of M/M/m system can be calculated as,
\begin{align}
\mathbb{E}_s^{M/M/l}[R] = \sum\limits_{v \in \mathcal{V}_s} \frac{l}{\mu_v} \times \bigg(1+\frac{\varrho}{l(1-\frac{\lambda_s}{\mu_v})}\bigg)
\end{align}    
 
\subsection{Analyzing Reliability and Estimating l}
Here, the system will be active if any one of the smaller VNFs is active at every VNF of an original SFC.
Therefore, the reliability $r_s^{M/M/l}$ of the new M/M/$l$ system is given by,
\begin{align}
r_s^{M/M/l} =  \prod_{i=1}^{|\mathcal{V}_s|}(1 - (1 - p_{v})^l) 
\label{equ:sc_M/M/m_rel}
\end{align}

%\subsection{Estimating $l$}
As $\mathbb{E}_s^{M/M/l}[R] \le \Psi_s$ and $l \in \mathbb{N}$, $l$ can be calculated as,
\begin{equation}
l = \arg \min\limits_{\stackrel{i \in \mathbb{N}}{\Psi_s > \mathbb{E}_s^{M/M/i}[R]}} (\Psi_s - \mathbb{E}_s^{M/M/i}[R])
\label{equ:l_mml}
\end{equation}

\begin{figure}[t]
	\centering
	\includegraphics [width=\columnwidth, height=2cm]{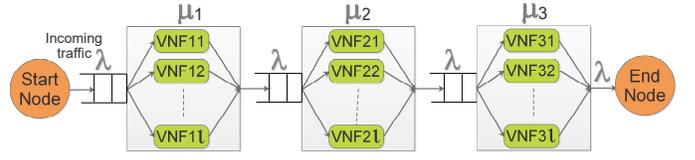}
	\caption{Subchaining as M/M/m tandem network of queues.} 				
	\label{Fig:sc-M/M/m}
\end{figure}

\begin{algorithm}[t]
	\small
	\caption{eRESERV solution}
	\label{algo:eReal}
	\hspace*{\algorithmicindent} \textbf{Input} ($\lambda_s, \mathcal{V}_s, \mu_v ~ \forall v \in \mathcal{V}_s, \Psi_s$) \\
	\hspace*{\algorithmicindent}  \textbf{Output} The number of subchains $l$ that can be created without violating $\Psi_s$
	\begin{algorithmic}[1]
		\If {M/M/1 setting}
		\State Calculate $l$ using Equation $\ref{equ:l_mm1}$
		\ElsIf{M/M/m setting}
		\State Calculate $l$ using Equation $\ref{equ:l_mml}$
		\EndIf
	\end{algorithmic}
\end{algorithm}

M/M/1 setting would be preferred in resource constrained systems and environments, and it reduces the migration delay as well when compared to M/M/m setting. Algorithm \ref{algo:eReal} gives the framework to identify the number of subchains in M/M/1 and M/M/m setting. If the preferred setting is M/M/1, then the number of subchains can be calculated in $\mathcal{O}(1)$ time. If the maximum number of subchains that can be made is $l_{max}$, then the number of subchains in M/M/m setting can be calculated in $\mathcal{O}(\log(l_{max}))$ time using binary search.

\section{Performance Analysis}
In this section, we evaluate the performance of our proposed solution methods presented in the previous section. Simulation parameters considered in our simulation are shown in Table \ref{tab:2}. Simulation results are obtained using discrete-event simulator MATLAB Simulink. We compare our proposed eRESERV M/M/1 and M/M/m settings with i) single SFC chain (SC) setting where there is one service chain for every service and ii) backup (SCB) setting where there is a dedicated backup for every VNF in an SFC. We compare our results in terms of reliability, expected response time, and number of resources required for an SFC. Note that, in the proposed M/M/1 and M/M/m settings, we first identify the number of subchains that can be created and then measure the performance metrics.

\begin{table}[h!]
  \begin{center}
  	\footnotesize
    \caption{Simulation parameters}
    \label{tab:2}
    \begin{tabular}{|l|l|}
      \hline 
      \textbf{Parameters} & \textbf{Values} \\
      \hline
      Arrival rate, $\lambda_s$ & 100 \\
      Serving rate of VNFs, $\mu_v$ & 200 \\
      Maximum allowed packet delay, $\Psi_s$ & 0.125 seconds\\
      Reliability rate of VNFs,  $p_{v}$ & 0.9 \\
      \hline
    \end{tabular}
  \end{center}
\end{table}

Fig. \ref{fig:subChansVRel} shows the effect of number of subchains created on the reliability. In SC setting, the reliability is always constant, however in SCB the reliability increases with increase in the number of backups. In any case, the reliability in eRESERV settings increases with increase in number of subchains (hence the title, ``the more the merrier"). M/M/m setting matches the reliability offered by the SCB setting, but consumes way less resources when compared to SCB setting as shown in Fig. \ref{fig:subChainsVRes}.

Fig. \ref{fig:VNFvNumOfSubchains} shows the number of subchains created by our proposed algorithm when the number of VNFs is varied in an SFC. As it can be seen, the number of subchains created decreases with an increase in the number of VNFs in an SFC in both the settings. However, the number of subchains created is more in M/M/m setting than in any other settings. This is due to the fact that, the expected response time in M/M/m setting is lesser than in M/M/1 setting when the number of subchains is increased as shown in Fig. \ref{fig:delayVsubchains}.

Fig. \ref{fig:delay} shows the effect of number of VNFs on the expected response time. As it can be seen, the proposed settings are able to meet the delay constraint at every instance of time. Although the response time in the proposed settings is higher than in SC and SCB settings, the reliability is the highest in M/M/m setting (with minimum amount of resources) when compared to all other settings as shown in Fig. \ref{fig:VNFVRel}.  

\begin{figure}[t]
	\begin{subfigure}{0.24\textwidth}
		\includegraphics[width=\textwidth]{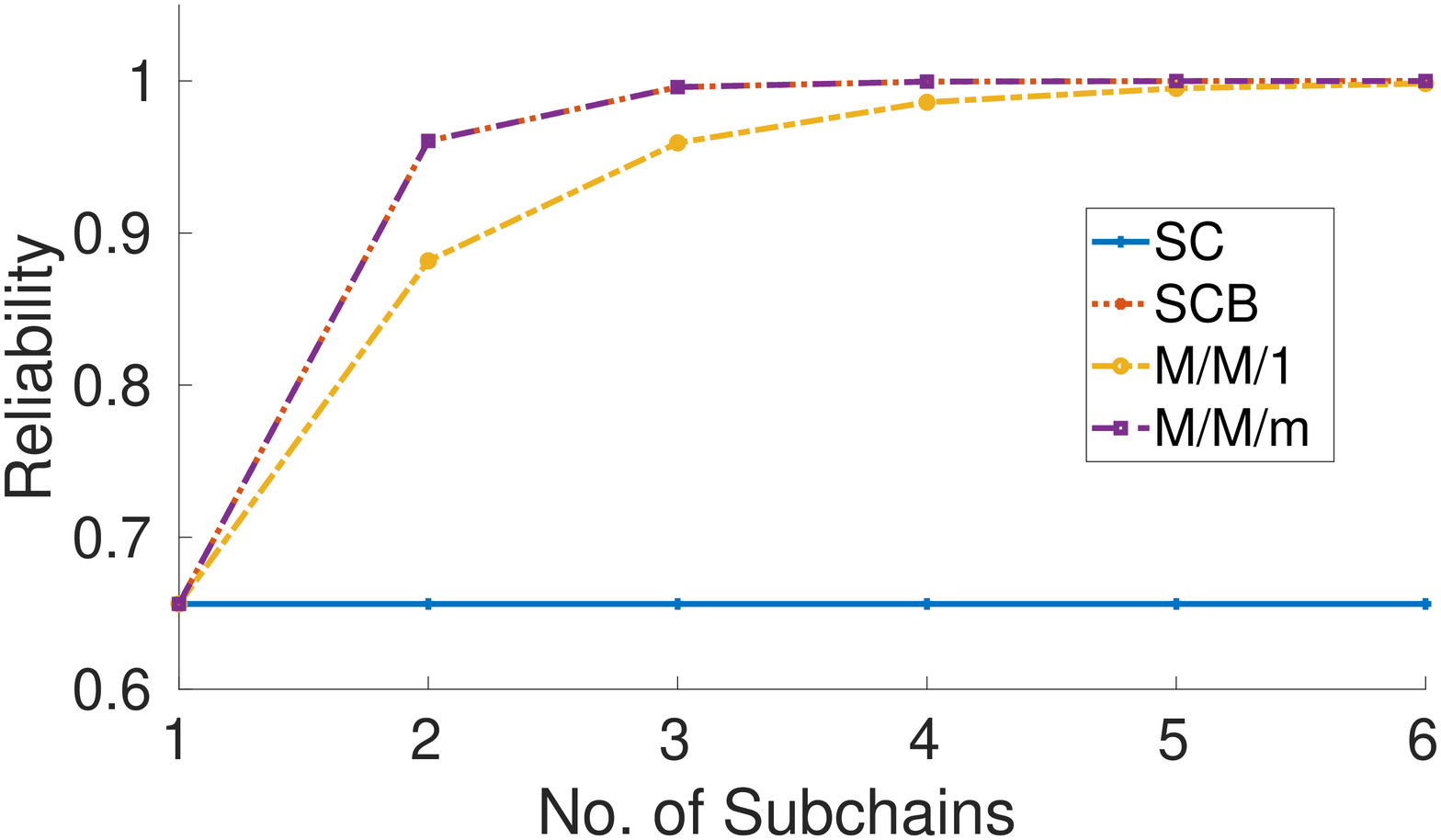}
		\caption{No. of subchains vs. Reliability}
		\label{fig:subChansVRel}
	\end{subfigure} 
	\begin{subfigure}{0.22\textwidth}
		\includegraphics[width=\textwidth]{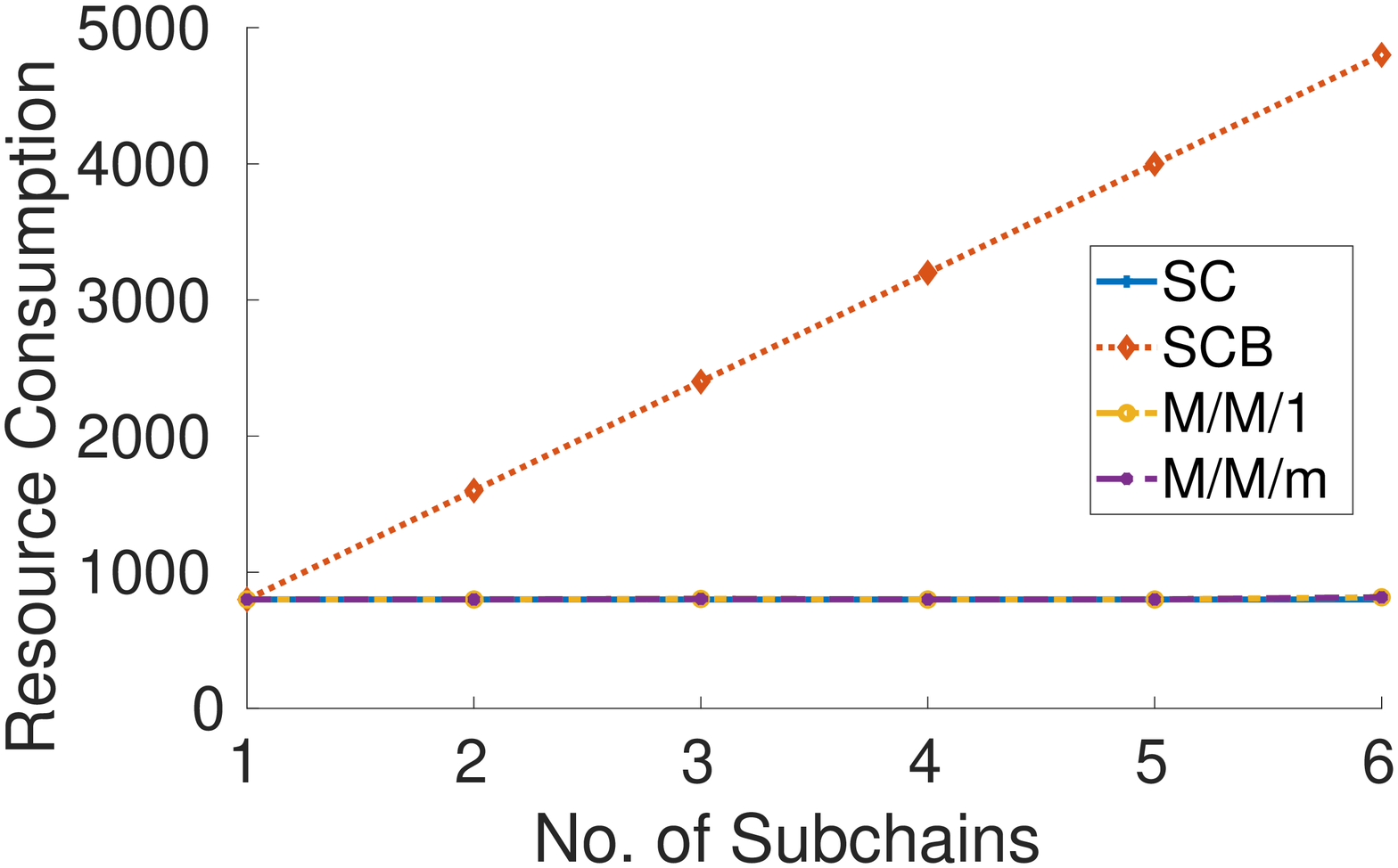}
		\caption{No. of subchains vs. Resource consumption}
		\label{fig:subChainsVRes}
	\end{subfigure}
	\begin{subfigure}{0.20\textwidth}
		\includegraphics[width=\textwidth]{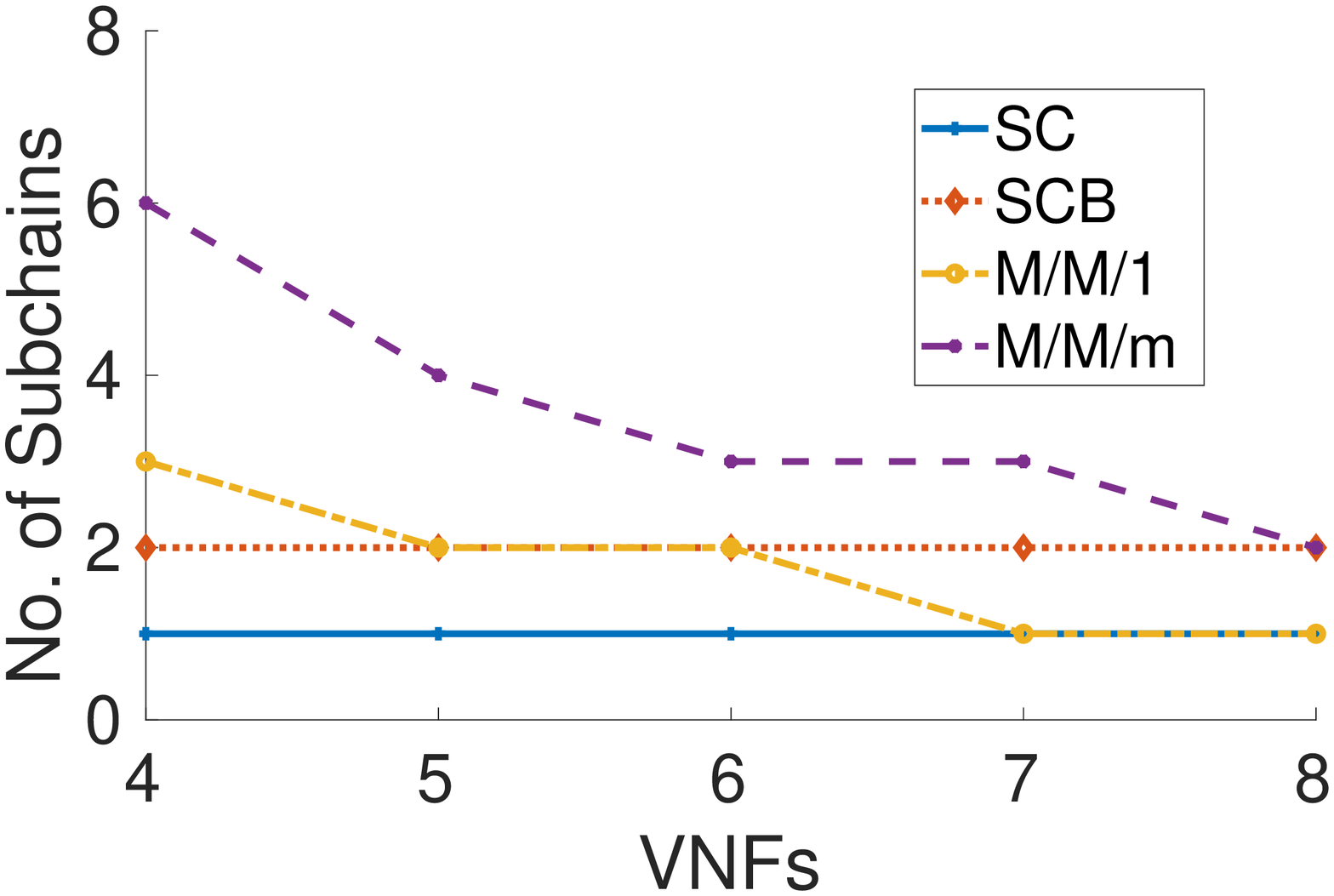}
		\caption{VNFs in an SFC vs. No. of subchains created}
		\label{fig:VNFvNumOfSubchains}
	\end{subfigure}
	\begin{subfigure}{0.22\textwidth}
		\includegraphics[width=\textwidth]{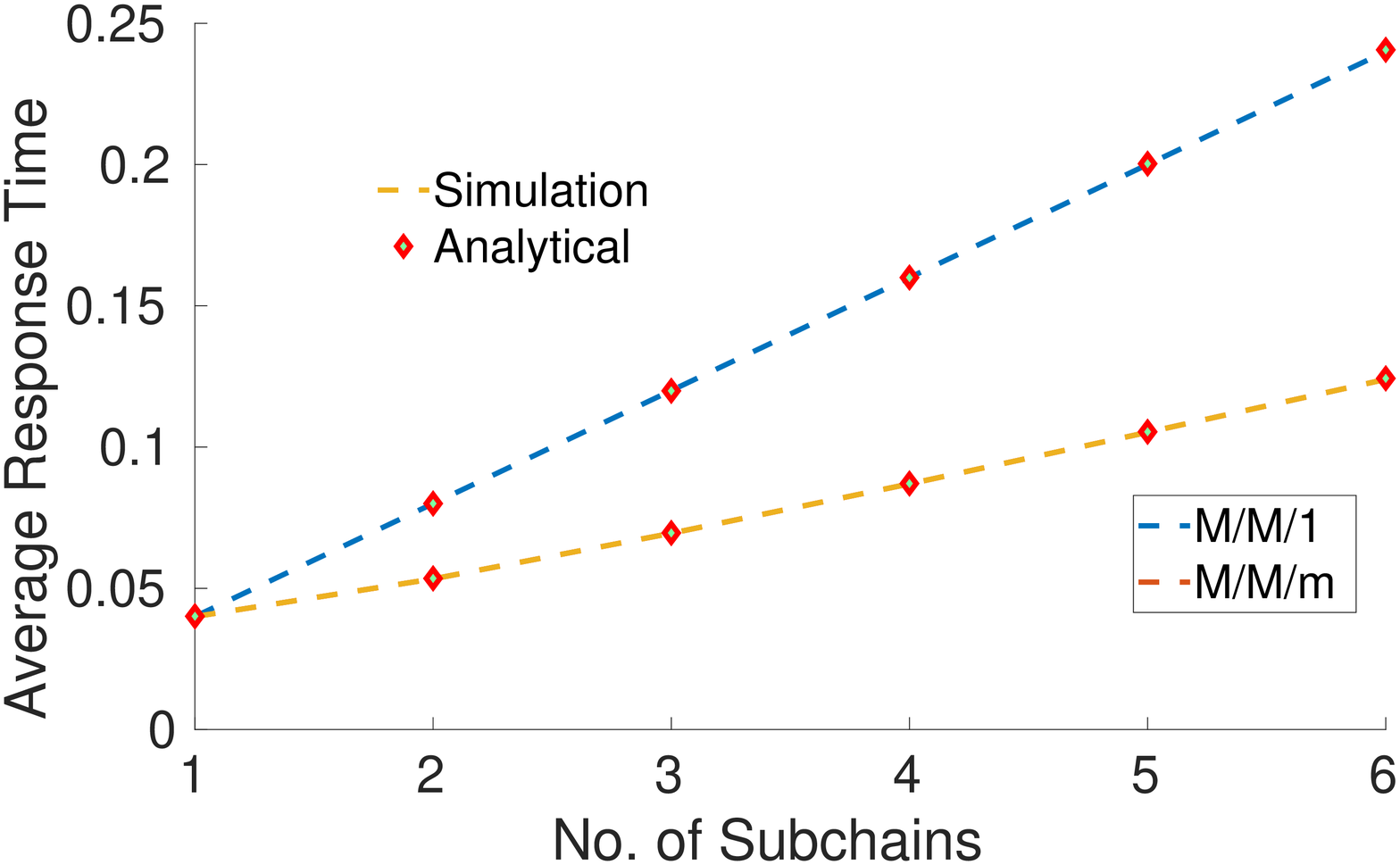}
		\caption{No. of subchains vs. Average response time}
		\label{fig:delayVsubchains}
	\end{subfigure}
	\begin{subfigure}{0.26\textwidth}
		\includegraphics[width=\textwidth]{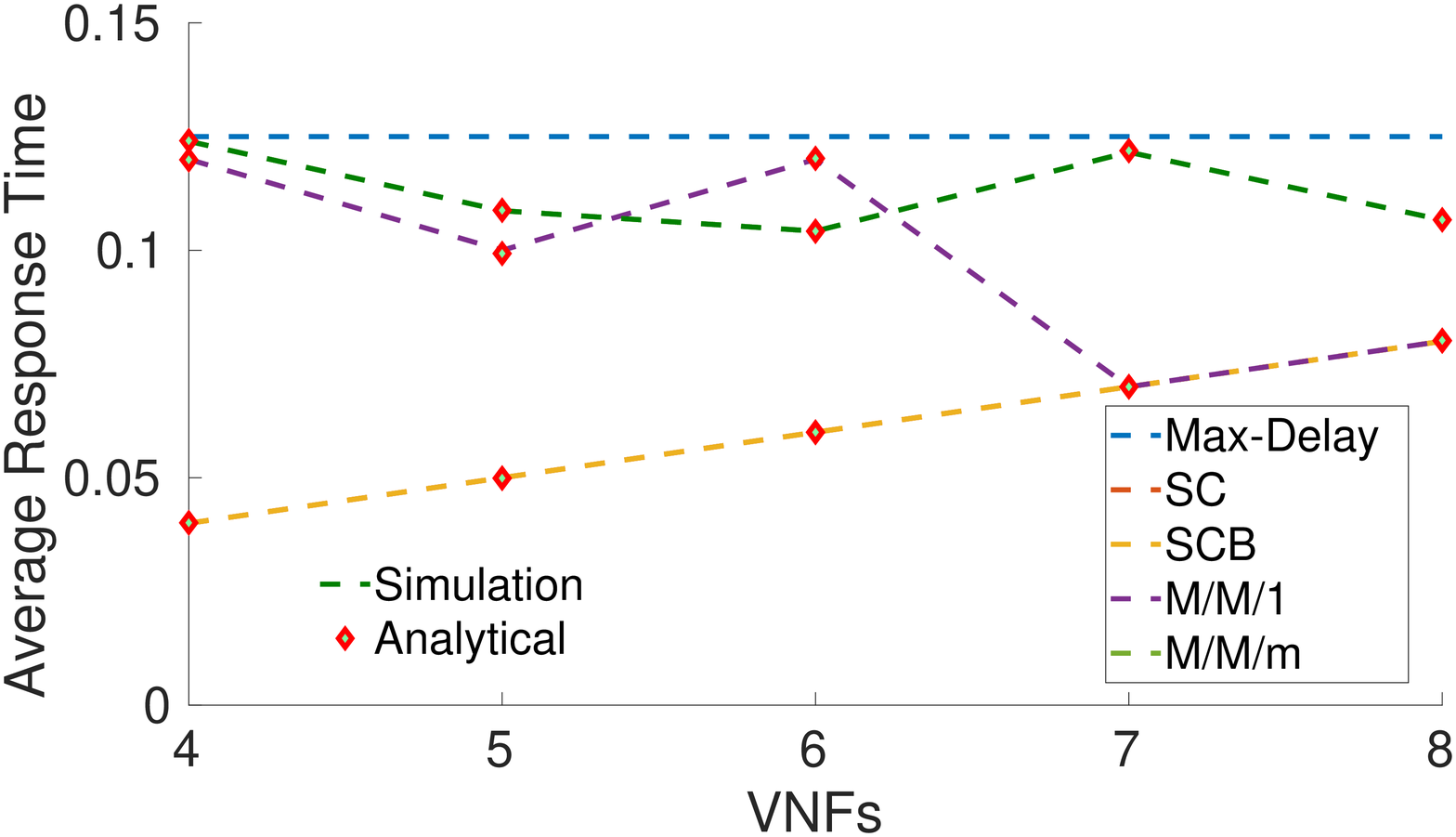}
		\caption{No. of subchains vs. Expected response time of an SFC}
		\label{fig:delay}
	\end{subfigure}
  \begin{subfigure}{0.21\textwidth}
    \includegraphics[width=\textwidth]{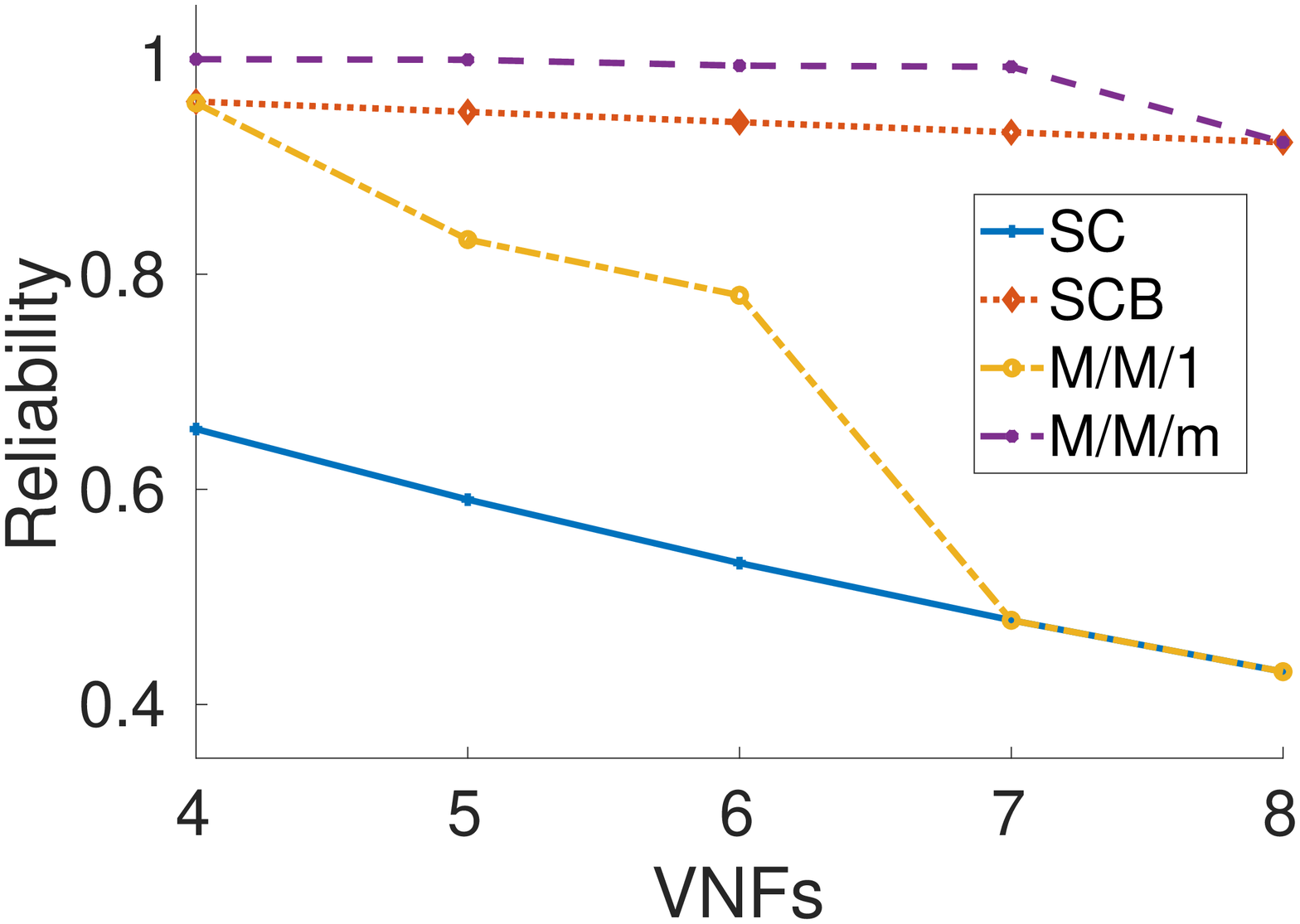}
    \caption{VNFs in an SFC vs. Reliability}
    \label{fig:VNFVRel}
  \end{subfigure}   
   \caption{Analytical and simulation results.}
\end{figure}

\section{Conclusion}
In this letter, we proposed a novel solution called eRESERV for enhancing the reliability of an SFC in 5G communication services. We first proposed utilization of subchains to enhance the reliability and decrease the amount of resource consumption. We analyzed this setting by modeling a VNF as an M/M/1 queue. Furthermore, to decrease the expected response time, we proposed a common scheduler for every VNF which was modeled as an M/M/m queue. Using queuing theory, we identified the number of subchains that can be created without violating the service delay constraints. By extensive simulations we showed the effectiveness of our proposed settings in terms of reliability, expected response time, and the amount of resources requested. 

In this letter, we considered that the substrate network is completely reliable. The placement of the proposed VNFs in an unreliable substrate network is an interesting study which we plan to pursue in our future work.

\ifCLASSOPTIONcaptionsoff
  \newpage
\fi

\bibliographystyle{IEEEtran}

\bibliography{references}

\end{document}